\newtheorem{thm}{Theorem}[section]
\newtheorem{co}[thm]{Corollary}
\newtheorem{lem}[thm]{Lemma}
\newtheorem{assumption}[thm]{Assumption}
\newtheorem{definition}[thm]{Definition}
\newtheorem{example}[thm]{Example}
\newtheorem{remark}[thm]{Remark}
\newcommand{\bracenom}{\genfrac{\lbrace}{\rbrace}{0pt}{}}
\begin{document}

\title{SASH: Decoding Community \\Structure in Graphs}

\author{
\IEEEauthorblockN{Allison Beemer\IEEEauthorrefmark{1}, Jessalyn Bolkema\IEEEauthorrefmark{2}}
\IEEEauthorblockA{\IEEEauthorrefmark{1}Department of Mathematics, University of Wisconsin-Eau Claire,
Eau Claire, WI 54701
}
\IEEEauthorblockA{\IEEEauthorrefmark{2} Department of Mathematics, California State University, Dominguez Hills, Carson, CA 90747 }}

\maketitle

\begin{abstract} 
Detection of communities in a graph entails identifying clusters of densely connected vertices; the area has a variety of important applications and a rich literature. The problem has previously been situated in the realm of error correcting codes by viewing a graph as a noisy version of the assumed underlying communities. In this paper, we introduce an encoding of community structure along with the resulting code’s parameters. We then present a novel algorithm, SASH, to decode to estimated communities given an observed dataset. We demonstrate the performance of SASH via simulations on an assortative planted partition model and on the Zachary's Karate Club dataset.
\end{abstract}

\section{Introduction}

Community detection is the problem of identifying clusters within a network: subsets of nodes that are more densely connected to each other than they are to the rest of the graph. Identifying this underlying structure is a fundamental question in network science with wide ranging applicability, including in biology, social sciences, cybersecurity, and beyond \cite{javed2018community}. 
Many clustering methods exist, each with their own strengths and limitations \cite{fortunato2010community}; unfortunately, there is no grand unified theory of community detection, nor is there likely to be \cite{aldecoa2013exploring}, given the wide variation in network properties, scale, and complexity. 
One branch of recent research seeks to understand the fundamental possibilities and limitations of community detection algorithms, identifying thresholds beyond which clustering is impossible or improbable \cite{abbe2018community}. 
% \jess{double check that there aren't better citations for this point. probably there are additional citations, at a minimum!}

The connection between the challenge of community detection and the tools of error correction is straightforward: if we think of the idealized community structure as an encoded message, the observed graph can be interpreted as a noisy version of that structure, distorted by missing or spurious relationships. In fact, this parallel is natural enough that existing literature already supports it. The connection was discussed in the context of information-theoretic bounds in \cite{abbe2015community}. In \cite{R18} and \cite{BR19}, a two-community detection model is described explicitly as a channel decoding problem, with the allowed clusterings expressed in the form of a linear code admitting graph-based message-passing decoding (specifically, Gallager). In this work, we propose a new model of community decoding by introducing a novel formulation of an error-correcting community code, as well as a decoding algorithm that exploits the unique properties of our community code. Our algorithm differs from previous work in its tailoring to the problem, as well as its flexibility in terms of the number and size of output community clusters. This work comprises a new technique for graph clustering and a new perspective for technical analysis.

% How is it different from other strategies? Talk about other ``code" approach introduced in \cite{abbe2015community} and developed in \cite{R18, BR19}. \jess{I think this is the story but need to double check} They use Gallager, make a linear code, split into two (allow the algorithm to be applied iteratively to split into more communities). In this model, both vertex labels and edge existence are information bits in the associated code; in our approach, we look only at edge existence or nonexistence, making our model inherently more flexible.

% Rather than prescribing the number of cliques, we place restriction on the smallest cluster size. The two are related, though not exactly...

Necessary background is discussed in Section \ref{sec:prelims}, and in Section \ref{sec:params} we give the parameters of the community code $C_{n,m}$. In Section \ref{sec:decoder}, we present our decoder, SASH. Section \ref{sec:simulations} contains simulation results for SASH, and Section \ref{sec:conc} concludes the paper.

\section{Preliminaries}
\label{sec:prelims}

As discussed above, the graph clustering problem can be phrased as a coding theory problem by viewing an existing dataset's graph as a noisy version of a clustering of the graph's vertices into disjoint cliques. In contrast to traditional channel coding problems, we are not permitted to freely design the code, but instead focus on efficient and accurate decoding of the received graph to the most likely codeword representing an allowed graph clustering.

We therefore begin by defining a family of \textit{community codes}, which determine the clusterings of a graph that can be output from our decoder. Throughout the paper, we will move back and forth between talking about a graph and the upper half of its corresponding \textit{adjacency matrix}. An adjacency matrix $A$ for a simple undirected graph $G$ with $n$ labeled vertices is a symmetric matrix such that a $1$ appears in entry $A_{ij}$, $i\neq j$, if and only if vertices $i$ and $j$ are adjacent in $G$. Otherwise, $A_{ij}=0$. 

\begin{definition}
       Let $n\geq 2$, $N=\binom{n}{2}$, and $1\leq m \leq n$. The community code $C_{n,m}\subseteq \mathbb{F}_{2}^{N}$ contains exactly those words corresponding to the upper half of the adjacency matrix of a (simple, undirected) graph on $n$ vertices consisting of disjoint cliques with minimum clique size lower-bounded by $m$. 
\end{definition}
% \jess{add example with graphs here}
\begin{example}
    The code $C_{4,1}$ consists of all simple graphs on four vertices, with no restrictions on minimum clique size. In comparison, the code $C_{4,2}$ has minimum clique size $m=2$ and thus contains only those graphs on four vertices with no isolated vertices. All codewords are shown in Figure~\ref{fig:graphs}.

\end{example}

    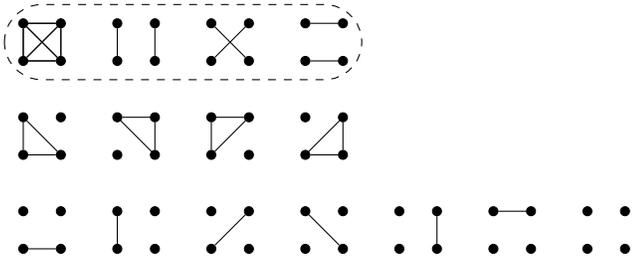
\begin{figure}
    \begin{tikzpicture}[scale=0.5, every node/.style={circle, draw, fill=black, inner sep=1.2pt}]
            % K_4 complete graph
        \node (A) at (0,5) {};
        \node (B) at (1,5) {};
        \node (C) at (0,6) {};
        \node (D) at (1,6) {};
        \foreach \i in {A,B,C,D} {
            \foreach \j in {A,B,C,D} {
                \ifthenelse{\equal{\i}{\j}}{}{\draw (\i) -- (\j);}
            }
        }
        
        % Empty graph
        \node (A1) at (15,0) {};
        \node (B1) at (16,0) {};
        \node (C1) at (15,1) {};
        \node (D1) at (16,1) {};

            % Two disjoint edges
        \node (A) at (7.5,5) {};
        \node (B) at (8.5,5) {};
        \node (C) at (7.5,6) {};
        \node (D) at (8.5,6) {};
        \draw (A) -- (B);
        \draw (C) -- (D);
    
        % Two disjoint edges
        \node (A) at (2.5,5) {};
        \node (B) at (3.5,5) {};
        \node (C) at (2.5,6) {};
        \node (D) at (3.5,6) {};
        \draw (A) -- (C);
        \draw (B) -- (D);
    
        % Two disjoint edges
        \node (A) at (5,5) {};
        \node (B) at (6,5) {};
        \node (C) at (5,6) {};
        \node (D) at (6,6) {};
        \draw (A) -- (D);
        \draw (B) -- (C);

        % Triangle and isolated vertex
        \node (A) at (0,2.5) {};
        \node (B) at (1,2.5) {};
        \node (C) at (0,3.5) {};
        \node (D) at (1,3.5) {};
        \draw (A) -- (B);
        \draw (B) -- (C);
        \draw (C) -- (A);
  
          % Triangle and isolated vertex
        \node (A) at (2.5,2.5) {};
        \node (B) at (3.5,2.5) {};
        \node (C) at (2.5,3.5) {};
        \node (D) at (3.5,3.5) {};
        \draw (D) -- (B);
        \draw (B) -- (C);
        \draw (C) -- (D);
    
        % Triangle and isolated vertex
        \node (A) at (5,2.5) {};
        \node (B) at (6,2.5) {};
        \node (C) at (5,3.5) {};
        \node (D) at (6,3.5) {};
        \draw (A) -- (D);
        \draw (D) -- (C);
        \draw (C) -- (A);
    
        % Triangle and isolated vertex
        \node (A) at (7.5,2.5) {};
        \node (B) at (8.5,2.5) {};
        \node (C) at (7.5,3.5) {};
        \node (D) at (8.5,3.5) {};
        \draw (A) -- (B);
        \draw (B) -- (D);
        \draw (D) -- (A);
    
        % One edge
        \node (A) at (0,0) {};
        \node (B) at (1,0) {};
        \node (C) at (0,1) {};
        \node (D) at (1,1) {};
        \draw (A) -- (B);
 
           % One edge
        \node (A) at (2.5,0) {};
        \node (B) at (3.5,0) {};
        \node (C) at (2.5,1) {};
        \node (D) at (3.5,1) {};
        \draw (A) -- (C);
    
        % One edge
        \node (A) at (5,0) {};
        \node (B) at (6,0) {};
        \node (C) at (5,1) {};
        \node (D) at (6,1) {};
        \draw (A) -- (D);
    
        % One edge
        \node (A) at (7.5,0) {};
        \node (B) at (8.5,0) {};
        \node (C) at (7.5,1) {};
        \node (D) at (8.5,1) {};
        \draw (B) -- (C);
    
        % One edge
        \node (A) at (10,0) {};
        \node (B) at (11,0) {};
        \node (C) at (10,1) {};
        \node (D) at (11,1) {};
        \draw (B) -- (D);
    
        % One edge
        \node (A) at (12.5,0) {};
        \node (B) at (13.5,0) {};
        \node (C) at (12.5,1) {};
        \node (D) at (13.5,1) {};
        \draw (C) -- (D);
        % Draw the rounded rectangle
        \draw[rounded corners=5mm, dashed] (-0.5, 4.5) rectangle (9, 6.5);
         % Draw the rounded rectangle
         % \draw[rounded corners=5mm, draw=black, dashed] (-1, -3.5) rectangle (14.5, 9.5);
    \end{tikzpicture}
    \caption{All codewords in $C_{4,1}$, with $C_{4,2}$ indicated by the dashed region.}
    \label{fig:graphs}
    \end{figure}
    
% \begin{tikzpicture}

%  \end{tikzpicture}
As suggested in the above definition, we will use $n$ to denote the number of vertices in our graph, and $m$ to indicate the minimum clique (cluster) size. The block length of the community code, $\binom{n}{2}$, will be denoted by $N$. We will discuss the parameters of ${C}_{n,m}$ further in Section \ref{sec:params}.

In real-world applications, there is not necessarily one correct clustering of a given dataset. Instead, one way to measure the accuracy of a particular clustering algorithm (in our case, decoder) is to fabricate a noisy set of clusters using the so-called \emph{planted partition} model \cite{holland1983stochastic}. %(stochastic block model? what's the best term here? Planted partition is fine for what we are doing! SBM is the most general term). 
In such a model, cluster labels are predetermined, and a graph is constructed on the set of labeled vertices by adding an edge within a cluster with one probability, $P$, and adding an edge between clusters with a different probability, $Q$. 

Indeed, the process of constructing the planted partition graph is equivalent to sending the codeword corresponding to a set of disjoint cliques through a binary asymmetric channel (BAC).
Let the probability that we see an edge where there is no edge in the underlying community structure be denoted $p$, and let the probability that we see no edge where there is one in the community structure be denoted $q$. 
We assume %(without loss of generality?? \jess{yes, I think this is the standard \emph{assortative} model assumption! Maybe actually $p\leq 1-q$})
that $p \leq q$ and $0<p+q<1$. In particular, it is more likely that an edge is missing where there should be one than that an extra edge has appeared. Notice that $p=Q$ from the planted partition model, and $q=1-P$. Since $p+q<1$, we recover $Q<P$, which distinguishes an \textit{assortative} planted partition model.

% [does this match community detection literature? maybe - the typical phrasing would be that we are building a random graph on labeled vertices by adding vertices within a part with probability $P$ and adding edges between distinct parts with probability $Q$, so the assortative model assumes $P>Q$. Our channel model removes edges within a part with probability $q$, so $q=1-P$, and , so $q=1-P<1-Q=1-p$, or $p+q< 1$? Channel model adds edges between parts with probability $p$, so $Q=p$? I think I have this in some slides somewhere so let me double-check]. 

Because our ``channel'' is asymmetric, we refer to the work of Cotardo and Ravagnani \cite{CR22}, which introduces \textit{discrepancy} as an extension of the familiar Hamming distance in the following sense:

\begin{thm}\cite{CR22}
Maximum likelihood decoding over a BAC corresponds to nearest neighbor decoding with ``distance'' between words given by discrepancy.
\end{thm}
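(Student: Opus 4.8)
The plan is to write out the BAC likelihood coordinate-by-coordinate, take logarithms, and then strip away every term that does not depend on the transmitted codeword; what is left is, up to normalization, the discrepancy, so that maximizing likelihood becomes minimizing discrepancy.

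First I would fix a received word $\mathbf{y}\in\mathbb{F}_2^N$ and, for a codeword $\mathbf{c}$, partition the $N$ coordinates according to the value of the pair $(c_i,y_i)$, writing $n_{00},n_{01},n_{10},n_{11}$ for the number of coordinates of each type. Since the channel acts independently on the coordinates, the likelihood factors as
\[
P(\mathbf{y}\mid\mathbf{c})=(1-p)^{\,n_{00}}\,p^{\,n_{01}}\,q^{\,n_{10}}\,(1-q)^{\,n_{11}},
\]
so maximum likelihood decoding selects the codeword maximizing $\log P(\mathbf{y}\mid\mathbf{c})=n_{00}\log(1-p)+n_{01}\log p+n_{10}\log q+n_{11}\log(1-q)$.

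Next I would eliminate the counts that are not genuine ``error counts.'' The multiplicities satisfy $n_{00}+n_{01}+n_{10}+n_{11}=N$, together with $n_{10}+n_{11}=\mathrm{wt}(\mathbf{c})$ and $n_{01}+n_{11}=\mathrm{wt}(\mathbf{y})$, so $n_{00}$, $n_{11}$, and $\mathrm{wt}(\mathbf{c})$ can all be written in terms of $n_{10}$, $n_{01}$ and the fixed numbers $N$ and $\mathrm{wt}(\mathbf{y})$. Substituting and collecting terms, a direct computation gives
\[
\log P(\mathbf{y}\mid\mathbf{c})=K(\mathbf{y})-n_{10}\log\tfrac{1-p}{q}-n_{01}\log\tfrac{1-q}{p},
\]
where $K(\mathbf{y})$ depends only on $N$, $\mathrm{wt}(\mathbf{y})$, $p$ and $q$, hence is constant once $\mathbf{y}$ has been received. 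The one point needing care is that the $n_{00}$ and $n_{10}$ contributions throw off a term proportional to $\mathrm{wt}(\mathbf{c})$; substituting $\mathrm{wt}(\mathbf{c})=n_{10}+\mathrm{wt}(\mathbf{y})-n_{01}$ is exactly what absorbs this apparent codeword dependence into $K(\mathbf{y})$ and into the two error terms.

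Finally I would invoke the standing assumptions. From $p\le q$ and $0<p+q<1$ we get $1-p>q$ and $1-q>p$, so both $\log\tfrac{1-p}{q}$ and $\log\tfrac{1-q}{p}$ are strictly positive (and one checks $\log\tfrac{1-q}{p}\ge\log\tfrac{1-p}{q}$, matching the intuition that a spurious edge is the rarer, hence more costly, error). Therefore $\log P(\mathbf{y}\mid\mathbf{c})$ is maximized over $\mathbf{c}$ exactly when
\[
\delta(\mathbf{c},\mathbf{y}):=|\{i:c_i=1,\ y_i=0\}|\cdot\log\tfrac{1-p}{q}+|\{i:c_i=0,\ y_i=1\}|\cdot\log\tfrac{1-q}{p}
\]
is minimized, which is precisely nearest neighbor decoding with respect to the discrepancy of \cite{CR22} (the coefficients above coincide with their weights up to a common positive scalar). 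I expect the main obstacle to be not conceptual but bookkeeping in the middle step: one must keep straight which quantities are frozen by $\mathbf{y}$ and which still vary with $\mathbf{c}$, since a careless regrouping leaves a stray $\mathrm{wt}(\mathbf{c})$ term that breaks the correspondence. A minor secondary point is the degenerate case $p=0$, handled by the convention that a zero-probability transition contributes infinite discrepancy, consistent with $P(\mathbf{y}\mid\mathbf{c})=0$.
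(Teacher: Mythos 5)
Your derivation is correct: this theorem is imported from \cite{CR22} and the paper gives no proof of its own, but your argument is the standard log-likelihood computation and it checks out against the paper's definitions. In particular, using $n_{00}=N-w_H(\mathbf{y})-n_{10}$ and $n_{11}=w_H(\mathbf{y})-n_{01}$ does absorb all codeword dependence into the two error counts, the hypotheses $p\leq q$ and $0<p+q<1$ give $1-p>q$ and $1-q>p$ so both coefficients are strictly positive, and dividing your objective by $\log\frac{1-p}{q}>0$ yields exactly $d_{01}(\mathbf{y},\mathbf{c})+\gamma_{p,q}\,d_{10}(\mathbf{y},\mathbf{c})=\delta_{p,q}(\mathbf{y},\mathbf{c})$ with $\gamma_{p,q}$ as in Equation \eqref{eq:gamma_def}. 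The only caution is notational: your $n_{01}$ (i.e., $c_i=0$, $y_i=1$) is the paper's $d_{10}(\mathbf{y},\mathbf{c})$, the spurious-edge count carrying the weight $\gamma\geq 1$, so keep the argument order straight when matching coefficients.
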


The discrepancy between two codewords corresponds to the Hamming distance exactly when the crossover probabilities of a binary bit-flip channel are equal: i.e., when the channel is symmetric. In the asymmetric case, a larger weight is given to the less likely flip:

\begin{definition}\cite{CR22}
    {Assume $p\leq q$ and $0<p+q<1$}. For $\mathbf{x},\mathbf{y} \in \mathbb{F}_2^n$, and $a,b\in\mathbb{F}_2$, let 
    $d_{ab}(\mathbf{y},\mathbf{x}) = |\{i \mid y_i=a, x_i=b\}|.$
    The \textit{discrepancy} between $\mathbf{y}$ and $\mathbf{x}$ is then given by
    \begin{equation}
        \delta_{p,q}(\mathbf{y},\mathbf{x})=\gamma_{p,q}d_{10}(\mathbf{y},\mathbf{x})+d_{01}(\mathbf{y},\mathbf{x})
    \end{equation}
    where
    \begin{equation}
    \label{eq:gamma_def} 
    \gamma_{p,q} = \log_{\frac{q}{1-p}}\left(\frac{p}{1-q}\right)
    \end{equation}
\end{definition}

%{\color{red}\href{https://www.desmos.com/3d/opjw2zugfw}{here's a graph of gamma}}

When the channel is understood, we will drop the subscripts on $\gamma$ and $\delta$. Observe that discrepancy is not symmetric, so is not a metric. We also make use of the following result:
\begin{lem}\cite{CR22}
    For $p\leq q$ and $0<p+q<1$, we have $\gamma_{p,q} \geq 1$, with equality when $p=q$.
\end{lem}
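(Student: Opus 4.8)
The plan is to convert the inequality $\gamma_{p,q} \ge 1$ into an elementary polynomial inequality in $p$ and $q$. First I would set $b := \tfrac{q}{1-p}$ and $a := \tfrac{p}{1-q}$, the base and argument of the logarithm in \eqref{eq:gamma_def}, and collect the sign information forced by the hypotheses. Since $0 < p+q < 1$, both $1-p$ and $1-q$ are positive, and in fact $p < 1-q$ and $q < 1-p$, so (assuming $p>0$; the case $p=0$ gives $a=0$ and $\gamma_{p,q}=+\infty$, for which the bound is immediate) both $a$ and $b$ lie in $(0,1)$. Consequently $\ln a < 0$ and $\ln b < 0$, so $\gamma_{p,q} = \tfrac{\ln a}{\ln b}$ is a well-defined positive real.

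Next, because $0 < b < 1$ the function $x \mapsto \log_b x$ is strictly decreasing, so $\gamma_{p,q} = \log_b a \ge 1 = \log_b b$ if and only if $a \le b$, i.e.\ $\tfrac{p}{1-q} \le \tfrac{q}{1-p}$. Clearing the positive denominators, this is equivalent to $p(1-p) \le q(1-q)$, and rearranging, to $(q-p)(q+p-1) \le 0$. The hypotheses give $q-p \ge 0$ and $q+p-1 < 0$, so the product is nonpositive; hence $\gamma_{p,q} \ge 1$. For the equality claim, note that $(q-p)(q+p-1) = 0$ exactly when $q=p$ (the factor $q+p-1$ being strictly negative), in which case $a=b$ and $\gamma_{p,q} = \log_b b = 1$; and if $p<q$ then $a<b$ forces $\gamma_{p,q} > 1$, so equality holds precisely when $p=q$.

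I do not anticipate a substantive obstacle; the one point requiring care is tracking the direction of inequalities, since $\log_b$ reverses them when $0<b<1$ (equivalently, dividing $\ln a$ by the negative number $\ln b$ reverses them) — this reversal must be applied exactly once, not zero or two times. The only other nuisance is the boundary case $p = 0$, which should be dispatched separately as above.
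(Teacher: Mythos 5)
Your argument is correct: reducing $\gamma_{p,q}\ge 1$ to $\tfrac{p}{1-q}\le\tfrac{q}{1-p}$ via the decreasing logarithm with base in $(0,1)$, and then to $(q-p)(q+p-1)\le 0$, is exactly the right elementary computation, and you handle the inequality reversal and the degenerate case $p=0$ properly. Note that the paper itself offers no proof here --- it cites the result from \cite{CR22} --- so there is nothing to compare against; your write-up is a valid self-contained verification under the paper's (slightly relaxed) hypotheses $p\le q$, $0<p+q<1$.
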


\begin{remark}
    The authors of \cite{CR22} assume $0<p\leq q <0.5$, but all the results we rely on here are true with $p\leq q$ and $0<p+q<1$. We prefer the latter condition because it allows for cliques to become significantly less dense in the observed dataset relative to the idealized community cluster. More generally, while a binary symmetric channel has a symmetry that easily allows for a restriction on crossover probability without loss of generality, this is not the case for the asymmetric case. We refer the reader to e.g. \cite{judy}, where the authors make the same assumptions we do.
\end{remark}

In the remainder of the paper, we let $[n]:=\{1,\ldots,n\}$. We indicate that two vertices are adjacent in an undirected graph using the notation $x\sim y$ (equivalently, $y\sim x$). The Hamming distance between two vectors is denoted $d_{H}(\cdot,\cdot)$, and the (Hamming) weight of a vector is denoted $w_{H}(\cdot)$. The weight distribution of a code with block length $N$ is given by a vector of length $N+1$, where nonnegative integer entry $i$ is the number of codewords of weight $i-1$.

\section{Community Code Parameters}
\label{sec:params}

In this section, we discuss the size, rate, and minimum discrepancy of the $C_{n,m}$ community codes. We begin by observing the following:

\begin{lem}
For integers $n\geq 3$, $m\in [n]$, $C_{n,m}$ is nonlinear.
\end{lem}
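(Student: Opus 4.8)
The plan is to invoke the most elementary obstruction to linearity: a binary linear code is closed under coordinatewise $\mathbb{F}_2$-addition, and in particular (being nonempty) must contain the all-zero word $\mathbf{0}\in\mathbb{F}_2^N$. I would split the argument into two cases according to $m$, since the value $m=1$ is exactly the one in which $\mathbf{0}$ is a legitimate codeword.

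First, suppose $m\geq 2$. The all-zero word is the upper half of the adjacency matrix of the empty graph on $n$ vertices, in which every vertex is isolated and hence lies in a clique of size $1<m$. Thus the empty graph is not an allowed clustering, so $\mathbf{0}\notin C_{n,m}$, and $C_{n,m}$ cannot be linear. (This step uses only $n\geq 2$ and $m\geq 2$.) Second, suppose $m=1$, where we use the hypothesis $n\geq 3$. Here $C_{n,1}$ imposes no lower bound on clique size, so a graph lies in $C_{n,1}$ precisely when each of its connected components is a clique. Choose three distinct vertices, say $1,2,3$, and let $\mathbf{c}_1$ be the word with a single $1$ in the coordinate indexing the edge $\{1,2\}$ and $\mathbf{c}_2$ the word with a single $1$ in the coordinate indexing the edge $\{2,3\}$. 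Each of these corresponds to a single edge together with $n-2$ isolated vertices, i.e., a disjoint union of cliques, so $\mathbf{c}_1,\mathbf{c}_2\in C_{n,1}$. Their sum $\mathbf{c}_1+\mathbf{c}_2$ corresponds to the graph with edge set $\{\{1,2\},\{2,3\}\}$ and all other vertices isolated; the component on $\{1,2,3\}$ is a path, not a clique, since the edge $\{1,3\}$ is absent. Hence $\mathbf{c}_1+\mathbf{c}_2\notin C_{n,1}$, so $C_{n,1}$ is not closed under addition and is therefore nonlinear.

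There is essentially no hard step here; the only point requiring care is recognizing that the cases $m\geq 2$ and $m=1$ genuinely need different witnesses, and that the hypothesis $n\geq 3$ is precisely what guarantees the existence of the three-vertex path witness in the $m=1$ case. I would present the two cases in exactly the order above, verifying in each that the exhibited word (either $\mathbf{0}$ or $\mathbf{c}_1+\mathbf{c}_2$) fails the defining clique condition of $C_{n,m}$.
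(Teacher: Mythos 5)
Your proof is correct and follows essentially the same route as the paper's: the absence of the all-zero word handles $m\geq 2$, and for $m=1$ two codewords are exhibited whose sum contains a path of length two and hence is not a disjoint union of cliques. The only difference is cosmetic — the paper sums a triangle with a single edge, while you sum two edges sharing a vertex — and both yield the same path witness.
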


\begin{proof}
We will view codewords as their corresponding graphs on $n$ vertices. Note that the all-zero codeword corresponds to the graph on $n$ vertices with no edges. The sum of two words in binary vector representation corresponds to the graph on $n$ vertices such that $x\sim y$ if and only if $x\sim y$ in exactly one of the summands.
First, assume $n\geq 3$, and $m>1$. Then, the all-zero word is not a codeword, and $C_{n,m}$ is not linear.
Otherwise, if $m=1$, consider a codeword graph $G$ that has a clique of size $3$ on vertices $x,y,z$. Let $G'$ be a graph such that $x,y$ form a clique. The sum of $G$ and $G'$ contains a path of length 2, which is not permissible in a codeword. Thus, $C_{n,m}$ is nonlinear.
\end{proof}

Next, we turn to the size of $C_{n,m}$. In the case where communities of size one are allowed, we have the following: 

\begin{lem}
\label{lem:size_m1}
    % When $m=1$, $n\geq 1$, $|C_{n,1}|$ is equal to the $n$th \textit{Bell number}, which may be defined recursively for $n\geq 1$ by
    % \begin{equation}
    %    B_{n+1} = \sum_{k=0}^{n}\binom{n}{k}B_k 
    % \end{equation}
    % for $n\geq 0$, with $B_0=1$, or explicitly for $n\geq 1$ as:
    When $m=1$, $n\geq 1$, $|C_{n,1}|$ is equal to the $n$th \textit{Bell number}, which may be defined recursively for $n\geq 1$ by
    \begin{equation}
       B_{n} = \sum_{k=0}^{n-1}\binom{n-1}{k}B_k, 
    \end{equation}
    with $B_0=1$, or explicitly for $n\geq 1$ as:
    % \begin{equation}
    %     B_n = \sum_{k=0}^{n}\bracenom{n}{k} = \sum_{k=0}^{n}\sum_{i=0}^{k}\frac{(-1)^{k-i}i^n}{(k-i)!i!},
    % \end{equation}
    \begin{equation}
        B_n = \sum_{k=1}^{n}\bracenom{n}{k} = \sum_{k=1}^{n}\sum_{i=0}^{k}\frac{(-1)^{k-i}i^n}{(k-i)!i!},
    \end{equation}
    where $\bracenom{n}{k}$ indicates the \textit{Stirling number of the second kind}.
\end{lem}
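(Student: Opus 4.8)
The plan is to establish a bijection between the codewords of $C_{n,1}$ and the set partitions of $[n]$, and then invoke the standard fact that the number of partitions of an $n$-element set is the Bell number $B_n$, with both the recursive and Stirling-number formulas being classical identities for $B_n$ that we may cite rather than reprove.

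First I would unwind the definition. A codeword of $C_{n,1}$ corresponds to the upper half of the adjacency matrix of a simple undirected graph on $n$ labeled vertices that is a disjoint union of cliques, with no lower bound on clique size since $m=1$. Such a graph is precisely determined by its connected components, each of which is a clique; equivalently, it is determined by the partition of the vertex set $[n]$ into the vertex sets of those cliques. Conversely, given any partition of $[n]$ into nonempty blocks, placing a complete graph on each block and no edges between blocks yields a graph that is a disjoint union of cliques, hence a valid codeword. This correspondence is clearly a bijection: two distinct partitions induce graphs with different edge sets (any two vertices are adjacent if and only if they lie in the same block), so distinct partitions give distinct words, and every disjoint-union-of-cliques graph arises this way. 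Therefore $|C_{n,1}|$ equals the number of set partitions of $[n]$.

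Then I would identify this count with $B_n$. The recursive formula $B_n = \sum_{k=0}^{n-1}\binom{n-1}{k}B_k$ follows by conditioning on the block containing a fixed element (say $n$): if that block has $k$ other elements chosen in $\binom{n-1}{k}$ ways, the remaining $n-1-k$ elements are partitioned in $B_{n-1-k}$ ways, and reindexing gives the stated sum; the base case $B_0 = 1$ records the empty partition of the empty set. The explicit formula in terms of Stirling numbers of the second kind holds because $\bracenom{n}{k}$ counts partitions of $[n]$ into exactly $k$ nonempty blocks, so summing over $k$ from $1$ to $n$ counts all partitions; the inclusion-exclusion expression for $\bracenom{n}{k}$ is the usual one obtained by counting surjections from $[n]$ onto a $k$-set and dividing by $k!$. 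I would present these as well-known identities and supply only brief justification.

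I do not anticipate a genuine obstacle here: the content of the lemma is the bijection between disjoint-union-of-cliques graphs and set partitions, which is immediate once the definition is unpacked, and everything downstream is standard combinatorics. The only point requiring a little care is being explicit that the graph is recoverable from the partition and vice versa — in particular that "disjoint cliques" forces each connected component to be complete, so there is no ambiguity in how a partition is realized as a graph — and that the $m=1$ hypothesis is exactly what removes any constraint on block sizes. If a fully self-contained treatment were desired, one could alternatively prove the recursion directly on $|C_{n,1}|$ by the same conditioning argument phrased in terms of the clique containing vertex $n$, thereby deriving the Bell recursion without explicitly naming set partitions; but the bijective route is cleaner and I would favor it.
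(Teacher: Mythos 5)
Your proposal is correct and takes essentially the same approach as the paper: the paper's proof simply notes that codewords of $C_{n,1}$ correspond bijectively to set partitions of $[n]$, which are counted by the Bell numbers. Your version spells out the bijection and the standard identities in more detail, but the underlying argument is identical.
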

\begin{proof}
    The Bell numbers count the number of partitions of $[n]$. The possible community structures on a graph with $n$ vertices, allowing for communities of size one, is the set of all partitions of the $n$ vertices.
\end{proof}

Recall that the Stirling number of the second kind $\bracenom{n}{k}$ counts the number of partitions of $[n]$ with exactly $k$ parts. Relevant to our work here, the \textit{$m$-associated Stirling number of the second kind} $S_{m}(n,k)$ counts the number of partitions of $[n]$ with exactly $k$ parts such that each part contains at least $m$ elements. This allows us to generalize Lemma \ref{lem:size_m1} to the case where $m>1$:

\begin{lem}
When $m>1$, $n\geq m$, 
% \begin{equation}
%     |C_{n,m}|=B_m(n) := \sum_{k=0}^{n}S_{m}(n,k)
% \end{equation}
\begin{equation}
    |C_{n,m}|=B_m(n) := \sum_{k=1}^{n}S_{m}(n,k)
\end{equation}
where $S_{m}(n,k)$ indicates the $m$-associated Stirling number of the second kind. 
%Recursively for $n\geq m-1$,
%     \begin{equation}
% B_m(n+1)=\sum_{k=0}^{n+1-m} \binom{n}{k}B_m(k)
%     \end{equation}
%     for $n\geq 0$, with $B_m(0),\ldots, B_{m}(m-2)=0$, $B_m(m-1)=1$.
Recursively for $n\geq m$,
    \begin{equation}
B_m(n)=\sum_{k=0}^{n-m} \binom{n-1}{k}B_m(k),
    \end{equation}
with $B_m(0),\ldots, B_{m}(m-2)=0$, $B_m(m-1)=1$.
\end{lem}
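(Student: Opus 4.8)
The plan is to prove both assertions by the same combinatorial identification that underlies Lemma~\ref{lem:size_m1}. A codeword of $C_{n,m}$ is, by definition, the upper half of the adjacency matrix of a graph on $[n]$ whose components are cliques, each of size at least $m$, and such a graph is determined by — and determines — the partition of $[n]$ into the vertex sets of those cliques. First I would check that the map sending a graph to its partition-by-components is a well-defined bijection from the codewords of $C_{n,m}$ onto the set of partitions of $[n]$ in which every block has size at least $m$: in a disjoint union of cliques the connected components are exactly the cliques, so adjacency is precisely ``same block,'' and conversely each such partition yields a unique disjoint-clique graph by making every block complete. Sorting these partitions by their number of blocks $k$ and using that $S_m(n,k)$ is by definition the number of partitions of $[n]$ into exactly $k$ blocks each of size $\ge m$ gives
\[
|C_{n,m}| = \sum_{k=1}^{n} S_m(n,k) =: B_m(n),
\]
where the terms with $k>\lfloor n/m\rfloor$ vanish and the sum may start at $k=1$ since $[n]$ is nonempty.

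For the recurrence I would condition on the block $B$ containing the distinguished vertex $n$. If $|B|=j$ then $m\le j\le n$; the remaining $j-1$ members of $B$ can be chosen from $[n-1]$ in $\binom{n-1}{j-1}$ ways, and the leftover $n-j$ vertices must be partitioned into blocks of size $\ge m$, counted by $B_m(n-j)$ (with the empty set admitting exactly one such partition). Hence $B_m(n)=\sum_{j=m}^{n}\binom{n-1}{j-1}B_m(n-j)$, and substituting $k=n-j$ together with $\binom{n-1}{j-1}=\binom{n-1}{n-j}=\binom{n-1}{k}$ yields
\[
B_m(n)=\sum_{k=0}^{n-m}\binom{n-1}{k}B_m(k).
\]
The values of $B_m$ at arguments below $m$ that the right-hand side references are exactly the base cases, which come directly from the combinatorial meaning: $B_m(0)=1$ (the lone empty partition) and $B_m(1)=\dots=B_m(m-1)=0$ (a nonempty set with fewer than $m$ elements has no partition into blocks of size $\ge m$). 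From $n=m$ onward the recursion is self-starting.

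I do not expect a genuine obstacle here — this is the textbook ``distinguished-element'' partition recurrence — so the only points needing care are verifying that the graph-to-partition correspondence is bijective in both directions and tallying the small regime $m\le n<2m$ correctly, where the recursion collapses to the single term $\binom{n-1}{0}B_m(0)$ and one must confirm it returns the right count (e.g.\ $B_m(m)=1$). If one preferred to bypass the conditioning argument for the recursive identity, one could instead invoke the known recurrence $S_m(n,k)=k\,S_m(n-1,k)+\binom{n-1}{m-1}S_m(n-m,k-1)$ and sum over $k$, but the extra factor of $k$ in the first term makes that route less clean than the direct count above.
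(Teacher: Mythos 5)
Your proof is correct and follows essentially the same route as the paper: identify codewords of $C_{n,m}$ with partitions of $[n]$ into blocks of size at least $m$, sum the $m$-associated Stirling numbers over the number of blocks, and obtain the recurrence by conditioning on the block containing the distinguished element $n$ (the paper parametrizes by the number $j$ of \emph{other} elements in that block, you by the block's size $j$; after reindexing these are the same computation). The one substantive difference is the base cases: you take $B_m(0)=1$ and $B_m(1)=\cdots=B_m(m-1)=0$, whereas the lemma as stated declares $B_m(0)=\cdots=B_m(m-2)=0$ and $B_m(m-1)=1$. Your values are the ones actually consistent with both the recurrence and the combinatorial meaning: the empty set has exactly one partition into blocks of size $\geq m$, while a nonempty set with fewer than $m$ elements has none. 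For instance, the recurrence gives $B_2(2)=\binom{1}{0}B_2(0)$, which must equal $1$, and $B_2(4)=B_2(0)+3B_2(1)+3B_2(2)=4=|C_{4,2}|$ only under your convention; the paper's own proof also implicitly uses $B_m(0)=1$ in the $j=n-1$ term. So your proposal in fact corrects what appears to be an error in the stated initial conditions rather than introducing a gap.
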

\begin{proof}
    % Consider a partition of $[n+1]$ into parts of size at least $m$. Without loss of generality, assume the element $(n+1)$ is in the first part of the partition, along with $j$ other elements, where $m-1\leq j \leq n$. There are $\binom{n}{j}$ ways to choose these $j$ elements and $B_m(n-j)$ ways to partition the remaining elements into parts of size at least $m$. Thus 
    % \begin{align} 
    %     B_m(n+1) & =\sum_{j=m-1}^{n} \binom{n}{j}B_m(n-j) \\
    %     % & = \sum_{j=m-1}^{n} \binom{n}{n-j}B_m(n-j) \\
    %     & = \sum_{k=0}^{n-(m-1)} \binom{n}{k}B_m(k) .
    % \end{align}
    Consider a partition of $[n]$ into parts of size at least $m$. Without loss of generality, assume the element $n$ is in the first part of the partition, along with $j$ other elements, where $m-1\leq j \leq n-1$. There are $\binom{n-1}{j}$ ways to choose these $j$ elements and $B_m(n-1-j)$ ways to partition the remaining elements into parts of size at least $m$. Thus 
    \begin{align} 
        B_m(n) & =\sum_{j=m-1}^{n-1} \binom{n-1}{j}B_m(n-1-j) \\
        & = \sum_{j=m-1}^{n-1} \binom{n-1}{n-1-j}B_m(n-1-j) \\
        & = \sum_{k=0}^{n-1-(m-1)} \binom{n-1}{k}B_m(k) .
    \end{align}
\end{proof}
\vspace{-0.2in}
% \begin{remark}
%     We note that we have...
%     Another way to think Stirling is a sum over partition types of multinomial coefficient per partition type \allison{sort of...?}. Given the type, it is straightforward to compute the weight of the codeword. This will become relevant in decoding; see Section~\ref{sec:decode}.
% \end{remark}

\begin{example}

Table \ref{tab:ex_wts_size} gives an example of the possible partition types, their codeword weights, and all $m$-associated Stirling numbers of the second kind (corresponding to choices of minimum clique size) for varying numbers of parts ($k$) in a partition of $n=4$. See Figure \ref{fig:graphs} for a graphical representation.

\begin{table}[hbt!]
    \centering
\begin{tabular}{|c||c|cc|c|c|}
\hline
$k$             & 1 & \multicolumn{2}{c|}{2}         & 3     & 4       \\ \hline  \hline
partition types & 4 & \multicolumn{1}{c|}{1,3} & 2,2 & 1,1,2 & 1,1,1,1 \\ \hline
weight          & 6 & \multicolumn{1}{c|}{3}   & 2   & 1     & 0       \\ \hline
$S_{1}(4,k)=\bracenom{4}{k}$    & 1 & \multicolumn{2}{c|}{7}         &   6   &    1   \\ \hline
$S_{2}(4,k)$             & 1 & \multicolumn{2}{c|}{3}         &  0   &    0   \\ \hline
$S_{3}(4,k)=S_{4}(4,k)$             & 1 & \multicolumn{2}{c|}{0}         &   0  &    0  \\ \hline

\end{tabular}
\vspace{0.1in}
\caption{Partition type, weight, and ($m$-associated)\\ Stirling numbers of the second kind when $n=4$.}
\label{tab:ex_wts_size}
\end{table}
When $m=1$, adding up the ($1$-associated) Stirling numbers of the second kind, we see 
$|C_{4,1}|=\sum_{k=1}^4\bracenom{4}{k}=15.$
The code $C_{4,1}$ has weight distribution vector $(1,6,3,4,0,0,1)$. If $m$ increases, the code size decreases because the set of allowed partition types is a subset of the previously-allowed types. For example, if $m=2$, only partition types $4$ and $2,2$ are allowed. This gives $|C_{4,2}|=\sum_{k=1}^4 S_2(4,k)=4$, with weight distribution vector $(0,0,3,0,0,0,1)$.
\end{example}

\begin{remark}
    Given the partition type of a codeword, it is straightforward to calculate its weight via a sum of binomial coefficients. However, it is possible for two distinct allowed partition types to have the same codeword weight. For example, the partitions $6,6$ and $2,2,8$ of $n=12$ both give codewords of weight $30$. The ease of weight computation and the non-injectivity of the weight calculation will come into play in the execution of SASH; see Section \ref{sec:decoder}.
\end{remark}

We next present the asymptotic rate of $C_{n,m}$ for mathematical interest. However, we observe that rate is not relevant in this scenario in the usual way (e.g. it is not necessarily desirable that the rate stay bounded away from zero), as we are only receiving, not transmitting, information.

\begin{lem}
\label{lem:rate_to_0}
Consider the family of $C_{n,m}$ codes for a fixed integer $m$. The rate of $C_{n,m}$ approaches zero as $n\to \infty$.

\end{lem}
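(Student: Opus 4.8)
The plan is to show directly that $\log_2|C_{n,m}|$ grows strictly more slowly than the block length $N=\binom{n}{2}$. Recall that for a (possibly nonlinear) code $C\subseteq\mathbb{F}_2^N$ the rate is $R=\frac{1}{N}\log_2|C|$, so it suffices to produce an upper bound on $|C_{n,m}|$ of the form $2^{o(n^2)}$.

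First I would reduce to the case $m=1$. Every partition of $[n]$ into parts of size at least $m$ is in particular a partition of $[n]$, so the set of clusterings counted by $C_{n,m}$ is contained in the set counted by $C_{n,1}$ (this is exactly the monotonicity in $m$ observed after the size lemmas). Hence $|C_{n,m}|\le|C_{n,1}|=B_n$ by Lemma~\ref{lem:size_m1}, and it is enough to bound the $n$th Bell number.

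Next I would bound $B_n$ crudely: the number of set partitions of $[n]$ is at most the number of functions $[n]\to[n]$ (assign to each element the label of the block containing it), so $B_n\le n^n$ and therefore $\log_2|C_{n,m}|\le\log_2 B_n\le n\log_2 n$. Combining the two bounds,
\[
R=\frac{\log_2|C_{n,m}|}{\binom{n}{2}}\le\frac{n\log_2 n}{\binom{n}{2}}=\frac{2\log_2 n}{n-1}\longrightarrow 0 \quad\text{as } n\to\infty .
\]

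There is essentially no obstacle here; the only points requiring minor care are the use of the nonlinear-code definition of rate and the justification of the inequality $|C_{n,m}|\le|C_{n,1}|$ via containment of allowed partition types. A sharper estimate such as $\ln B_n\sim n\ln n$ is available but unnecessary: any polynomial-in-$n$ bound on $\log_2 B_n$ already suffices against $\binom{n}{2}=\Theta(n^2)$.
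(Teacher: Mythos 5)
Your proposal is correct and follows essentially the same route as the paper: reduce to $m=1$ via the monotonicity $|C_{n,m}|\le|C_{n,1}|=B_n$, bound the Bell number by (roughly) $n^n$, and observe that $n\log_2 n$ is negligible against $\binom{n}{2}$. Your direct injection argument for $B_n\le n^n$ is a slightly cleaner way to get the bound than the paper's term-by-term estimate of the explicit double-sum formula, but the substance is identical.
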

\begin{proof}
    Since $|C_{n,m}|\geq |C_{n,m+1}|$, we will prove the statement only for $m=1$. In this case, the rate is given by 
    \begin{align}
        \frac{\log_{2}(B_n)}{N} &= \frac{\log_{2}\left(\sum_{k=1}^{n}\sum_{i=0}^{k}\frac{(-1)^{k-i}i^n}{(k-i)!i!}\right)}{N} \\
    &\leq \frac{\log_{2}\left(\binom{n+2}{2} \cdot n^ n\right)}{\binom{n}{2}} \\
    %pull out the n^n, make the alternating all positive
    % then there are 2+3+...+(n+1) terms, all at most 1
    &= \frac{2\log_{2}\left((n+2)(n+1)\right)-2}{n(n-1)} +\frac{2\log_{2}\left(n\right)}{(n-1)}\label{eq:rate_ub}
    \end{align}
    Equation \eqref{eq:rate_ub} approaches 0 as $n\to \infty$.
\end{proof}

\begin{co}
    Consider the family of $C_{n,m}$ codes for $m=\lceil \frac{n}{k}\rceil $, where $k$ is a fixed positive real number. The rate of $C_{n,m}$ approaches zero as $n\to \infty$.  
\end{co}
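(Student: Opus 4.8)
The plan is to sandwich the rate of $C_{n,m}$ between $0$ and the rate of $C_{n,1}$ and then apply Lemma~\ref{lem:rate_to_0}. Two elementary facts do all the work. First, for every $n$ large enough that $m=\lceil n/k\rceil\le n$, the graph that is a single $n$-clique is a valid codeword (it corresponds to the partition of $[n]$ into one part), so $|C_{n,m}|\ge 1$ and the rate is nonnegative. Second, if $m'\ge m$ then every partition of $[n]$ into parts of size at least $m'$ is in particular a partition into parts of size at least $m$, so $|C_{n,m'}|\le|C_{n,m}|$; in particular $|C_{n,m}|\le|C_{n,1}|=B_n$ for every $m\ge 1$ (this is the same monotonicity already invoked in the proof of Lemma~\ref{lem:rate_to_0}). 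Combining the two observations,
\begin{equation*}
0 \;\le\; \frac{\log_2|C_{n,m}|}{N} \;\le\; \frac{\log_2 B_n}{N},
\end{equation*}
and the right-hand side tends to $0$ as $n\to\infty$ by Lemma~\ref{lem:rate_to_0}, so the squeeze theorem yields the claim.

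Concretely, I would carry this out in three steps: (i) observe that $m=\lceil n/k\rceil\le n$ for all sufficiently large $n$, so $C_{n,m}$ is nonempty and the statement is non-vacuous; (ii) record the monotonicity bound $|C_{n,m}|\le|C_{n,1}|=B_n$; and (iii) form the two-sided inequality above and conclude via Lemma~\ref{lem:rate_to_0} and the squeeze theorem.

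There is no genuinely difficult step here; the only thing requiring care is the degenerate regime. If $k<1$ then $\lceil n/k\rceil>n$ for every $n\ge 1$, so no codeword graph can contain a clique that large and $C_{n,m}$ is empty — the corollary should thus be read as applying to $k\ge 1$ (equivalently, to the cofinitely many $n$ for which $\lceil n/k\rceil\le n$). If one wants the rate of convergence explicitly rather than just the limit, a slightly more refined count suffices: since $m\ge n/k$, every allowed partition has at most $\lfloor k\rfloor$ nonempty parts, and the number of partitions of $[n]$ into at most $\lfloor k\rfloor$ parts is at most $\lfloor k\rfloor^{\,n}$, so the rate is at most $\frac{2\log_2\lfloor k\rfloor}{n-1}=O(1/n)$, matching the order of the bound obtained for $C_{n,1}$ in Lemma~\ref{lem:rate_to_0}. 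The monotonicity route is shorter, but this direct estimate explains why the rate decays at the same $\Theta(1/n)$ order for $m$ growing linearly in $n$ as it does for fixed $m$.
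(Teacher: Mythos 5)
Your proof is correct and is essentially the paper's own argument: the paper likewise invokes the monotonicity $|C_{n,\lceil n/k\rceil}|\le|C_{n,1}|$ together with Lemma~\ref{lem:rate_to_0}. Your added remarks on the degenerate case $k<1$ and the sharper $O(1/n)$ bound via counting partitions into at most $\lfloor k\rfloor$ parts are sound but go beyond what the paper records.
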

\begin{proof}
    This is easily shown using Lemma \ref{lem:rate_to_0}  and the fact that $|C_{n,\frac{n}{k}}| \leq |C_{n,1}|$.
\end{proof}

% \begin{itemize}
%     % \item  Otherwise... \href{https://math.stackexchange.com/questions/515051/bell-number-with-minimum-bound-on-partition-size}{This answer} 
%     \item Rate (goes down with $m$) will be a trade-off with decoding complexity, potentially? But increasing $m$ could potentially increase accuracy, if small cliques don't make sense in context.
%     \item to-do: Prove that rate goes to 0 (we are confident of this but should write this down!)
%     \item to-do: maybe talk about generating functions/prove exponential generating function for this nice sequence 
% \end{itemize}

Finally, we present results on minimum discrepancy. We focus on $n\geq 2m$, since if $n$ is less than $2m$, there is only one codeword in $C_{n,m}$.

\begin{thm}
\label{thm:disc_lower_bd}

    Assume $n\geq 2m$. The minimum discrepancy of $C_{n,m}$ is bounded below by 
    $\delta(C_{n,m})\geq \min\{m^2, (1+\gamma)m\}$,
    where $\gamma$ depends on the channel parameters and is as defined in Equation \eqref{eq:gamma_def}.
    %May need $m\geq 3m$ so that we don't have a single clique that must be split into e.g. $m$ and $m+7$. Hmm how annoying. Maybe it's okay because it's a lower bound?
\end{thm}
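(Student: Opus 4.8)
The plan is to recast the theorem as a statement about pairs of codeword graphs. Write two distinct codewords as graphs $G,G'$, each a disjoint union of cliques of size at least $m$, with vertex partitions $\pi\neq\pi'$, and put $L=E(G)\setminus E(G')$ and $M=E(G')\setminus E(G)$. Since $C_{n,m}$ contains both $G$ and $G'$ and $\gamma\geq 1$, the minimum discrepancy of $C_{n,m}$ equals the minimum over all such pairs of $\max(|L|,|M|)+\gamma\min(|L|,|M|)$, an expression symmetric in $L$ and $M$. So it suffices to prove $\max(|L|,|M|)+\gamma\min(|L|,|M|)\geq\min\{m^2,(1+\gamma)m\}$ for every distinct pair; we may assume $m\geq 2$, since for $m=1$ the bound is simply $\geq 1$.

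First I would dispatch the easy regimes. If $\min(|L|,|M|)\geq m$, the left side is at least $(1+\gamma)m$ and we are done; so assume $\min(|L|,|M|)\leq m-1$, and, swapping $G$ with $G'$ if needed, take $|M|$ to be that minimum. Similarly, if $|L|=\max(|L|,|M|)\geq m^2$ we are done, so assume $|L|\leq m^2-1$. Next, if one of $\pi,\pi'$ refines the other then, since the partitions differ, a block of the coarser one splits into at least two blocks of the finer one, each of size at least $m$; the at least $m^2$ edges between two such sub-blocks lie entirely in $L$ (or entirely in $M$), contradicting $|L|\leq m^2-1$ (or $|M|\leq m-1$). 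Hence $\pi$ and $\pi'$ are incomparable, and both $L$ and $M$ are nonempty.

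The heart of the proof is a structural analysis of this remaining case under the standing hypotheses $|M|\leq m-1$ and $|L|\leq m^2-1$. A $\pi'$-block lying inside a single $\pi$-block contributes nothing to $M$, whereas a $\pi'$-block $B'$ meeting the $\pi$-blocks in pieces of sizes $c_1\geq c_2\geq\cdots$ contributes $\sum_{i<j}c_ic_j$ edges to $M$. A short estimate (using $|B'|\geq m$, and $|B'|\geq 3$ when there are three or more pieces) shows this contribution is at least $m$ unless $B'$ meets exactly two $\pi$-blocks in intersection sizes $m-1$ and $1$, in which case $|B'|=m$ and the contribution is $m-1$. Since at least one $\pi'$-block straddles (otherwise $\pi'$ refines $\pi$), the bound $|M|\leq m-1$ forces exactly one straddling block $B'=\{w\}\cup R$ with $|R|=m-1$, where $R$ lies in a $\pi$-block $A_2$ and $w$ lies in a different $\pi$-block $A_1$; moreover $|M|=m-1$ and every other $\pi'$-block lies inside a single $\pi$-block. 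I would then examine how $\pi'$ partitions $A_1\setminus\{w\}$, $A_2\setminus R$, and each remaining $\pi$-block: if any of these is cut into two or more $\pi'$-blocks, the at least $m^2$ edges between two of those blocks lie in $L$, a contradiction; otherwise $A_1\setminus\{w\}$ and $A_2\setminus R$ are themselves $\pi'$-blocks, forcing $|A_1|-1\geq m$ and $|A_2\setminus R|\geq m$, and then the $|A_1|-1$ lost edges at $w$ inside $A_1$ together with the $(m-1)|A_2\setminus R|$ lost edges between $R$ and $A_2\setminus R$ inside $A_2$ already give $|L|\geq m+(m-1)m=m^2$, the final contradiction.

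I expect this last structural step — forcing the unique straddling block to have exactly the form $\{w\}\cup R$ with $|R|=m-1$ and then squeezing out the $m^2$ lost edges — to be the main obstacle; the refinement case and the $\min\geq m$ case are routine. A minor point to verify carefully is the contribution estimate for small $m$ (especially $m=2,3$), where the inequality $\sum_{i<j}c_ic_j\geq m$ for a $\pi'$-block meeting three or more $\pi$-blocks uses that such a block has at least three vertices in addition to having at least $m$.
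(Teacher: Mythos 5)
Your proof is correct, and it is complete where the paper's own argument is only sketched (the authors omit their proof for space). The two arguments are organized differently. The paper splits on whether some vertex simultaneously loses and gains edges, handling the ``only loses or only gains'' case by the same clique-splitting/vertex-moving logic as the upper-bound construction, and otherwise ``fanning out'' from a vertex of mixed type to propagate a ripple of forced edge changes. You instead reduce to the symmetric quantity $\max(|L|,|M|)+\gamma\min(|L|,|M|)$ (correct, since $\gamma\geq 1$), dispose of the regimes $\min\geq m$ and $\max\geq m^2$, rule out the case where one partition refines the other, and then run a block-level count: each straddling $\pi'$-block contributes at least $m-1$ to $M$ with equality only for the shape $\{w\}\cup R$, $|R|=m-1$, after which the edges at $w$ inside $A_1$ and between $R$ and $A_2\setminus R$ force $|L|\geq m+(m-1)m=m^2$. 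I checked the contribution estimate: for a block meeting pieces of sizes $c_1\geq\cdots\geq c_k$ with $\sum c_i=s\geq m$, one has $2\sum_{i<j}c_ic_j=s^2-\sum c_i^2\geq s(s-c_1)$, which gives at least $2s\geq 2m$ whenever $s-c_1\geq 2$ (in particular for $k\geq 3$), leaving only the $(s-1,1)$ case with $s=m$ as the exceptional one; so your worry about small $m$ is unfounded. Your partition-refinement dichotomy roughly tracks the paper's ``pure loss/gain'' versus ``mixed'' vertex dichotomy, and your distinguished vertex $w$ plays the role of the vertex they fan out from, but the bookkeeping by $\pi'$-blocks rather than by individual vertices is cleaner and yields a self-contained argument; its main cost is that it gives less direct intuition for why the extremal configurations of Theorem~\ref{thm:disc_upper_bd} are exactly the tight ones.
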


Our proof of the above result, which we omit here for the sake of space, argues that no discrepancy between two codewords of $C_{n,m}$ can be smaller than the given minimum by splitting into two cases: (1) each vertex only loses or gains edges, or (2) some vertex both loses and gains edges. The first case's argument shares logic with the proof of Theorem \ref{thm:disc_upper_bd} below. In the second case, we fan out from the assumed vertex that both loses and gains edges, arguing a ripple effect that results in at least the stated discrepancy.

We observe that the minimum is necessary, as the smaller value will be determined by the value of $\gamma$, which in turn is calculated based on BAC crossover probabilities $p$ and $q$.

\begin{thm}
\label{thm:disc_upper_bd}
    Assume $n\geq 3m+1$. The minimum discrepancy of $C_{n,m}$ is bounded above by $\delta(C_{n,m})\leq \min\{m^2,(1+\gamma)m\}$, where $\gamma$ depends on the channel parameters and is as defined in Equation \eqref{eq:gamma_def}.
\end{thm}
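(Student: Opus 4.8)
The plan is to prove the upper bound constructively: for each of $m^2$ and $(1+\gamma)m$, I will write down an explicit pair of distinct codewords of $C_{n,m}$ whose discrepancy equals that value. Since $\delta(C_{n,m})$ is by definition at most $\delta(\mathbf{y},\mathbf{x})$ for every ordered pair of distinct codewords $\mathbf{x},\mathbf{y}$, this is enough to conclude $\delta(C_{n,m})\leq\min\{m^2,(1+\gamma)m\}$. I will describe codewords by the partition of $[n]$ into cliques that they encode, and use repeatedly that the discrepancy from a codeword $\mathbf{x}$ to a codeword $\mathbf{y}$ equals $\gamma$ times the number of edges present in $\mathbf{y}$ but not $\mathbf{x}$, plus the number of edges present in $\mathbf{x}$ but not $\mathbf{y}$.

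For the bound $\delta(C_{n,m})\leq m^2$, I would take $\mathbf{x}$ to be the codeword with a clique on $\{1,\dots,2m\}$ and a clique on $\{2m+1,\dots,n\}$ (valid because $n\geq 3m+1$ forces the second block to have size $n-2m\geq m+1$), and $\mathbf{y}$ to be the codeword obtained by splitting the first block into cliques on $\{1,\dots,m\}$ and $\{m+1,\dots,2m\}$ while leaving the rest alone. These graphs agree outside $\{1,\dots,2m\}$, every edge of $\mathbf{y}$ is an edge of $\mathbf{x}$, and $\mathbf{x}$ has exactly $\binom{2m}{2}-2\binom{m}{2}=m^2$ more edges than $\mathbf{y}$, so the discrepancy from $\mathbf{x}$ to $\mathbf{y}$ is $\gamma\cdot 0+m^2=m^2$.

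For the bound $\delta(C_{n,m})\leq(1+\gamma)m$, I would take $\mathbf{x}$ to be the codeword with a clique on $\{1,\dots,m+1\}$, a clique on $\{m+2,\dots,2m+1\}$, and a clique on $\{2m+2,\dots,n\}$ (valid because $n\geq 3m+1$ forces the last block to have size $n-2m-1\geq m$), and $\mathbf{y}$ to be the codeword obtained by moving the single vertex $1$ out of the first clique and into the second, so the blocks become $\{2,\dots,m+1\}$ of size $m$, $\{1\}\cup\{m+2,\dots,2m+1\}$ of size $m+1$, and $\{2m+2,\dots,n\}$ unchanged; this is again valid and distinct from $\mathbf{x}$. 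The only edges that change are the ones at vertex $1$: it loses its $m$ edges inside the old first clique and gains $m$ edges to the old second clique, so the discrepancy between $\mathbf{x}$ and $\mathbf{y}$ is $\gamma m+m=(1+\gamma)m$ in either direction. Together with the previous construction this yields the theorem.

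The computations here---the identity $\binom{2m}{2}-2\binom{m}{2}=m^2$ and the edge bookkeeping---are routine. The one point that genuinely needs attention, and the reason the hypothesis is $n\geq 3m+1$ rather than something weaker, is verifying that both constructions actually produce codewords of $C_{n,m}$: each construction leaves a block of ``leftover'' vertices outside the part being modified, of size $n-2m$ in the first case and $n-2m-1$ in the second, and these must themselves form a clique of size at least $m$. The hypothesis $n\geq 3m+1$ is exactly strong enough to guarantee this in both cases; with only $n\geq 3m$ the second construction breaks down, and one can check the bound itself can fail in that case, so this is the step to treat carefully.
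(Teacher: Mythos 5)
Your proposal is correct and matches the paper's proof essentially exactly: both exhibit the pair obtained by splitting a $2m$-clique into two $m$-cliques (discrepancy $m^2$) and the pair obtained by moving one vertex from an $(m+1)$-clique into an adjacent $m$-clique (discrepancy $(1+\gamma)m$), with a leftover clique absorbing the remaining vertices. Your explicit verification that $n\geq 3m+1$ is what makes the leftover blocks legal is a welcome detail the paper leaves implicit.
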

\begin{proof}
To prove the bound, we exhibit a pair of codewords with discrepancy $(1+\gamma)m$ and a pair with discrepancy $m^2$.

To achieve a discrepancy of $(1+\gamma)m$, consider the codeword that has three cliques of sizes $m+1$, $m$, and $n-(2m+1)$. Take one vertex of the clique of size $m+1$, remove it from this clique, and add it to the clique of size $m$. The result remains a codeword, and the discrepancy between the two words is equal to $(1+\gamma)m$, as desired.

To achieve a discrepancy of $m^2$, consider the codeword that has two cliques of sizes $2m$ and $n-2m$. Split the clique of size $2m$ into two, each of size $m$. The result remains a codeword, and the discrepancy between the two words is equal to $m^2$, as desired.
\end{proof}

The following is a combination of Theorems \ref{thm:disc_lower_bd} and \ref{thm:disc_upper_bd}.

\begin{co}
    For $n\geq 3m+1$, $\delta(C_{n,m})= \min\{m^2,(1+\gamma)m\}$.
\end{co}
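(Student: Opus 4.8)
The plan is to read off the equality as a sandwich between the two preceding theorems: Theorem~\ref{thm:disc_lower_bd} gives $\delta(C_{n,m}) \geq \min\{m^2,(1+\gamma)m\}$ and Theorem~\ref{thm:disc_upper_bd} gives $\delta(C_{n,m}) \leq \min\{m^2,(1+\gamma)m\}$, and combining the two inequalities forces $\delta(C_{n,m}) = \min\{m^2,(1+\gamma)m\}$.

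The one thing I would check explicitly is that both theorems may legitimately be invoked under the single hypothesis $n \geq 3m+1$. Theorem~\ref{thm:disc_upper_bd} is stated for exactly this range, so nothing is needed there. Theorem~\ref{thm:disc_lower_bd} requires only $n \geq 2m$; since $m \geq 1$ we have $3m+1 > 2m$, so $n \geq 3m+1$ implies $n \geq 2m$ and the lower bound is available as well. Hence $n \geq 3m+1$ is the binding constraint, and it simultaneously activates both bounds with the same quantity $\min\{m^2,(1+\gamma)m\}$ on each side; note that which branch of the minimum is achieved is governed by $\gamma$, i.e. by the channel parameters $p,q$, consistent with the remark following Theorem~\ref{thm:disc_lower_bd}.

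I do not expect any genuine obstacle: all of the substantive work has already been discharged in the two cited theorems --- the case analysis on whether a vertex only gains/loses edges or does both, for the lower bound, and the two explicit constructions (the three-clique codeword realizing discrepancy $(1+\gamma)m$ and the clique-splitting codeword realizing discrepancy $m^2$) for the upper bound. Once the hypothesis bookkeeping above is in place, the corollary is immediate.
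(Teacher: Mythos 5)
Your proposal matches the paper exactly: the corollary is stated there as an immediate combination of Theorems~\ref{thm:disc_lower_bd} and \ref{thm:disc_upper_bd}, with the hypothesis $n\geq 3m+1$ being the stronger of the two and thus activating both bounds. Your extra check that $n\geq 3m+1$ implies $n\geq 2m$ is correct and is the only bookkeeping needed.
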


% \begin{remark}
%     The missing exact values of discrepancy are for the range $2m\leq n \leq 3m$. Notice that if $n<2m$, there will be only one codeword in the code, which is the clique on $n$ vertices. For the missing range, the codewords can have up to 2 cliques for $n<3m$, and up to $3$ (balanced) cliques for $n=3m$.
% \end{remark}

Using similar arguments, we may also show the following:

\begin{thm}
    For $2m<n< 3m$, 
    \begin{equation}
        \delta(C_{n,m})= \min\{(n-m-1)+\gamma m,m(n-m)\}.
    \end{equation}
    At $n=2m,3m$, we have:
    \begin{align}
        \delta(C_{2m,m})&= \min\{2(m-1)(1+\gamma),m^2\},\\
        \delta(C_{3m,m})&= \min\{2(m-1)(1+\gamma),m^2, 2m-1+\gamma m\}.
    \end{align}
\end{thm}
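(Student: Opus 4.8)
The plan is to follow the two–sided approach of Theorems~\ref{thm:disc_lower_bd} and~\ref{thm:disc_upper_bd}: exhibit explicit pairs of codewords attaining each value inside the stated minimum (the upper bound), then argue that every pair of distinct codewords has discrepancy at least that minimum (the lower bound). What makes this range tractable is a structural observation: a codeword of $C_{n,m}$ corresponds to a partition of $[n]$ into blocks of size at least $m$, and for $n<3m$ such a partition has at most two blocks, while for $n=3m$ it has at most three and three blocks forces every block to have size exactly $m$. So every codeword is $K_n$, or a union of two disjoint cliques of sizes $a$ and $n-a$ with $m\le a\le n-m$, or (only at $n=3m$) three disjoint $m$-cliques, and the whole proof reduces to inspecting the handful of possible pairs of types.

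For the upper bound I take the ``elementary moves'' between neighbouring types. Turning $K_n$ into two cliques of sizes $m$ and $n-m$ deletes the $m(n-m)$ edges between them, giving discrepancy $m(n-m)$. Moving one vertex from the size-$(n-m)$ clique to the size-$m$ clique of such a codeword (legal since $n\ge 2m+1$ gives $m+1\le n-m$) deletes $n-m-1$ edges and adds $m$, giving discrepancy $(n-m-1)+\gamma m$. When $n=3m$ one also splits the size-$2m$ clique of a $\{m,2m\}$-codeword into two $m$-cliques (discrepancy $m^2$) and swaps one vertex between two of three $m$-cliques (discrepancy $2(m-1)(1+\gamma)$); when $n=2m$ the only available moves are splitting $K_{2m}$ (discrepancy $m^2$) and swapping a vertex between the two $m$-cliques (discrepancy $2(m-1)(1+\gamma)$). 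In each case one checks the result is still a codeword and tallies added and deleted edges.

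For the lower bound, fix distinct codewords $\mathcal P$ and $\mathcal Q$ and split on their types. If one is $K_n$, then going between them only deletes the between-block edges of the other (or, reversed, only adds them), so the discrepancy equals $a(n-a)$ or $\gamma\,a(n-a)$, which is at least $m(n-m)$ when $2m<n<3m$ and at least $m^2$ in the two boundary cases. If both have exactly two blocks, I encode the pair by the $2\times2$ matrix $c_{ij}=|A_i\cap B_j|$, so that the number of deleted edges is $c_{11}c_{12}+c_{21}c_{22}$, the number of added edges is $c_{11}c_{21}+c_{12}c_{22}$, and $\delta(\mathcal Q,\mathcal P)=\gamma(c_{11}c_{21}+c_{12}c_{22})+(c_{11}c_{12}+c_{21}c_{22})$. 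Every row- and column-sum of $c$ is at least $m$, the entries sum to $n$, and $\mathcal P\ne\mathcal Q$ forces $c$ to be neither ``diagonal'' nor ``anti-diagonal''; together these leave $c$ with at most one zero entry, and a short one-variable optimization then yields $\delta(\mathcal Q,\mathcal P)\ge(n-m-1)+\gamma m$ when $n\ge 2m+1$, while at $n=2m$ the same bookkeeping (no vertex move is possible, only a balanced swap) gives $2(m-1)(1+\gamma)$. The only further type-pairs arise at $n=3m$, where one or both codewords may be the finest partition; there the analogous $2\times3$ and $3\times3$ intersection-matrix estimates — using a Schur-convexity/majorization bound on $\sum_{i,j}c_{ij}^2$, plus the fact that for a square matrix with all row and column sums equal the added- and deleted-edge counts coincide — deliver the bounds $m^2$ and $2(m-1)(1+\gamma)$. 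Minimizing over all type-pairs reproduces the claimed values.

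I expect the real work to be the lower bound's intersection-matrix optimization: one must use the ``block size $\ge m$'' constraints to their full strength so as to exclude spuriously cheap configurations, most notably a single vertex straddling two clusters, which looks like a tiny change but is forbidden whenever it would create a block of size $m-1$. This is precisely where the hypothesis $n\ge 2m+1$ enters, and it also explains why $n=2m$ must be stated separately (no vertex move survives, only swaps). The $n=3m$ case is the most laborious because all three codeword types are present and one must bound all six type-pairs, but each of those bounds is itself elementary.
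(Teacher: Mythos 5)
Your proposal is correct, and it is worth noting that the paper itself gives no proof of this theorem at all --- it is introduced only with ``using similar arguments,'' pointing back to the omitted proof of Theorem~\ref{thm:disc_lower_bd} and the exhibited-pairs proof of Theorem~\ref{thm:disc_upper_bd}. Your upper-bound half is exactly the paper's technique: exhibit the split, the single vertex move, and (at the boundary values) the balanced swap, and tally added versus deleted edges with the cheaper of the two orderings of the discrepancy. Your lower-bound half, however, takes a genuinely different route from what the paper sketches. The paper's (omitted) argument for Theorem~\ref{thm:disc_lower_bd} is local: it cases on whether some vertex both gains and loses edges and propagates a ``ripple effect'' outward. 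You instead exploit the global structural fact that $2m\le n\le 3m$ forces every codeword to be $K_n$, a two-block partition, or (only at $n=3m$) three exact $m$-blocks, and then bound each type-pair via the intersection matrix $c_{ij}=|A_i\cap B_j|$, with $D=c_{11}c_{12}+c_{21}c_{22}$ deleted edges, $A=c_{11}c_{21}+c_{12}c_{22}$ added edges, and discrepancy $\gamma\min(A,D)+\max(A,D)$. This is cleaner and fully self-contained in this regime, correctly isolates why $n=2m$ must be split off (the vertex move needs $n-m-1\ge m$), and your key structural claims check out: at most one zero entry in $c$, equality at the single-straddling-vertex configuration $c_{12}=1,\ c_{21}=0$, and $D=A$ whenever all row and column sums of a square $c$ are equal (which yields $2(m-1)(1+\gamma)$ at $n=2m$ and $n=3m$). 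The one place you are thinner than you should be is the optimization itself: the all-positive $2\times2$ case and the $2\times3$ pairs at $n=3m$ are asserted as ``short'' or ``elementary'' rather than carried out (e.g., for all-positive entries one can use $D+A=(c_{11}+c_{22})(c_{12}+c_{21})$ and $\min(D,A)\ge m$; for a two-block versus three-block pair one checks the discrepancy is at least $m^2$). These do work out, so this is a matter of missing detail rather than a gap in the argument, but a complete write-up would need those finitely many inequalities verified explicitly.
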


Intuitively, the above results on minimum discrepancy suggest that the larger $m$ is, and the larger $\gamma$ is, the larger the minimum discrepancy between codewords, and the more accurate a decoder that minimizes discrepancy will be.

\begin{example}
\label{ex:min_disc}
    Over a BAC with $p=0.1$, $q=0.3$, $\delta(C_{16,4})=\min\{4(1+\gamma),16\}\approx 11.085$ and $\delta(C_{16,5})=\min\{5(1+\gamma),25\}\approx 13.856$.
    For these channel parameters, the only time the $m^2$ term is the minimum is $m=2$; in this case, $\delta(C_{16,2})=\min\{2(1+\gamma),4\}=4$.
    We see jumps at $m=6$ and $m=8$, with $\delta(C_{16,6})=\min\{9+6\gamma,60\}\approx 19.628$, and $\delta(C_{16,8})=2(m-1)+2\gamma(m-1)=14+14\gamma\approx 38.797$. Starting at $m=9$, the code has a single codeword.
\end{example}

\section{SASH Decoding Algorithm}
\label{sec:decoder}

Unsurprisingly, maximum likelihood (ML) decoding of $C_{n,m}$ is computationally infeasible. In this setting, we observe that ML is also equivalent to (multiple instances of) the graph alignment problem. Thus, we present a novel decoding algorithm customized to the community code's properties: Slide Along, Shuffle/Hop (SASH).

SASH makes use of the weight distribution of the community codes, as well as the fact that Maximum Likelihood decoding is equivalent to Nearest Neighbor decoding with discrepancy. Recall that $C_{n,m}$ has low rate in general, and that it is straightforward to construct codewords with a given weight. 

SASH moves radially out (Slide Along) from the Hamming weight of our received word, randomly checking a fixed number of codewords selected uniformly at random (Shuffle) from each partition type (Hop) with a particular weight. The closest encountered codeword, in terms of discrepancy, is returned at the end of this process.
Because the difference in weights between two words gives a lower bound on their Hamming distance, and hence discrepancy, once we find a codeword at discrepancy $\delta$ from our received word, this puts an upper bound on how far out (in terms of weight) from our received word we need to search. See Algorithm \ref{alg:decoder} for pseudocode of SASH.

\begin{algorithm} 
\caption{\label{alg:decoder}SASH}
\begin{algorithmic}[1]
\STATE \textbf{input:} $\mathbf{y}$, $n$, $m$, $p$, $q$, $t$
\STATE $w\gets wt_{H}(y)$; $L\gets$ $\{wt_{H}(\mathbf{c})\mid \mathbf{c}\in C_{n,m}\}$
\FOR {$W \in L$}
\STATE $P_{n,m}(W) \gets \{$allowed partition types with weight $W\}$
\ENDFOR
\STATE $\rho \gets \binom{n}{2}$; $\Delta \gets 0$
\WHILE {$\Delta <\rho$}
\FOR {$W \in \{w-\Delta,w+\Delta\}\cap L$}
\FOR {$p \in P_{n,m}(W)$} 
\STATE choose $\mathbf{c}_1,\ldots,\mathbf{c}_t$ of partition type $p$ 
\STATE $\delta, \ j = \min_{i}\delta_{p,q}
(\mathbf{y},\mathbf{c}_{i}), \ \text{argmin}_{i}\delta_{p,q}
(\mathbf{y},\mathbf{c}_{i})$
\IF {$\delta < \rho$}
\STATE $\hat{\mathbf{x}} \gets \mathbf{c}_j$; $\rho \gets \delta$
\ENDIF 
\ENDFOR
\ENDFOR
\STATE $\Delta \gets \Delta+1$
\ENDWHILE
\STATE \textbf{return:} $\hat{\mathbf{x}}$
\end{algorithmic}
\end{algorithm}

% \begin{remark}
%     Recall that the \textit{covering radius} of a code $C\subseteq \mathbb{F}_{q}^{N}$ is given by $\max_{\mathbf{x}\in \mathbb{F}_{q}^{n}}\min_{\mathbf{c}\in C}d_{H}(\mathbf{x},\mathbf{c})$. In other words, the covering radius is the smallest value $\rho$ such that balls of radius $\rho$ centered at codewords cover the entire space. Observe that before having knowledge of the first radius update in our algorithm, the initial radius of our decoder could be set to the covering radius of the community code, instead of $\binom{n}{2}$ as it currently is in Algorithm \ref{alg:decoder}. The covering radius of $C_{n,m}$ is a topic of ongoing work. If the covering radius of the code is $\rho$, note that we perform at most the following number of discrepancy calculations:
%     \[t
%     \cdot\max_{\mathbf{y}}\#\{ \text{partition types with weight } t \text{ s.t. } |t-wt_{H}(\mathbf{y})|\leq \rho\}\]
% \end{remark}

\section{Simulations}
\label{sec:simulations}

In this section, we present several simulations and performance measures to demonstrate the effectiveness of SASH.

The results in Figure \ref{fig:plot_accuracy} show the accuracy (over 100 trials at each $t$ value) of the decoder for a planted partition model on $n=16$ vertices. The clusters are randomly chosen according to a stipulation of $m=4, 6$, or $8$, and passed through an artificial BAC with $p=0.1$, $q=0.3$. We then run our decoder for the same $p,q$ values and with increasing $t$ values, checking whether the codeword returned after each trial is exactly the sent codeword, or whether it is ``as good'' in the sense that the discrepancy between the received word and codeword estimate is at most the discrepancy between the received word and sent codeword. Observe that overall, the performance of the decoder increases with increased $t$; recall that $t$ determines the number of codeword candidates that are checked at each partition type. It is also the case that increasing $m$ strengthens the performance of the decoder in the ``as good'' case. Recall from Example \ref{ex:min_disc} that the minimum discrepancy of $C_{16,m}$ jumps up at $m=6$, and again at $m=8$. However, we currently check each partition type a fixed number of times regardless of the number of possible types or the total number of codewords of each type. As a result, when $m=6$, more checks overall are performed than when $m=8$; we suspect this contributes to the unexpected relative ``exact'' performance for these $m$ values.  

\begin{figure}[hbt!]
    \centering
    \includegraphics[width=1\linewidth]{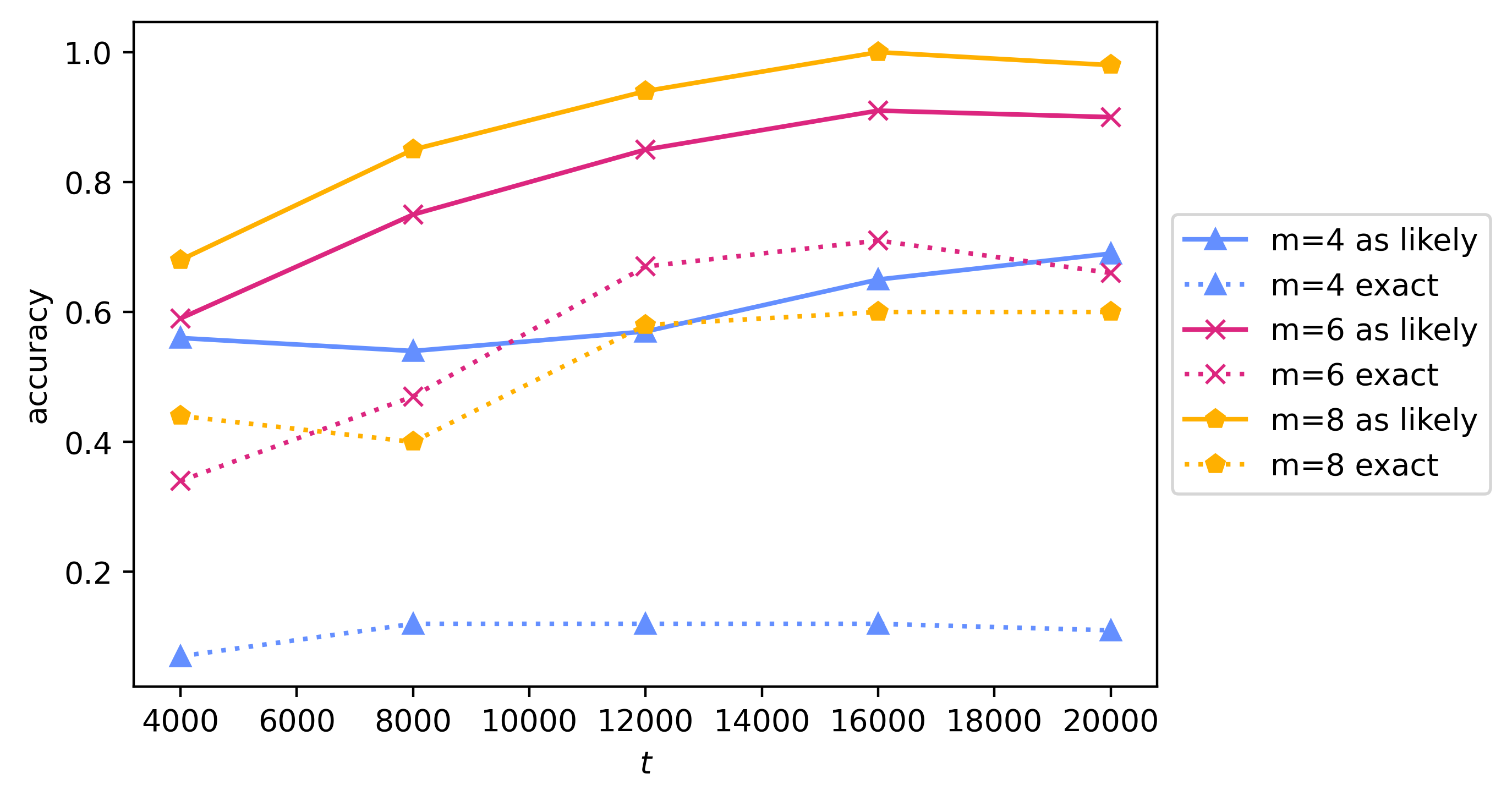}
    \caption{Decoder accuracy with $n=16$, $m=4,6,8$, $p=0.1$, and $q=0.3$ for a varying number of checks $t$ at each partition type.}
    \label{fig:plot_accuracy}
\end{figure}

We also compute the Adjusted Rand Index (ARI), a standard information-theoretic metric for computing the similarity between two clusterings. A completely random clustering should have ARI value equal to 0, while a perfect clustering has ARI equal to 1. See, e.g., \cite{ARI1, ARI2} for a formal definition. 

The results in Figure \ref{fig:plot_ARI} show the mean ARI of the decoder for the same set of planted partitions as in the accuracies plot of Figure \ref{fig:plot_accuracy}. We again see that exact recovery is most likely when $m=6$ and performance declines for $m=8$, which mirrors the results of the ``exact'' performance in Figure~\ref{fig:plot_accuracy}. 

\begin{figure}[hbt!]
    \centering
    \includegraphics[width=0.9\linewidth]{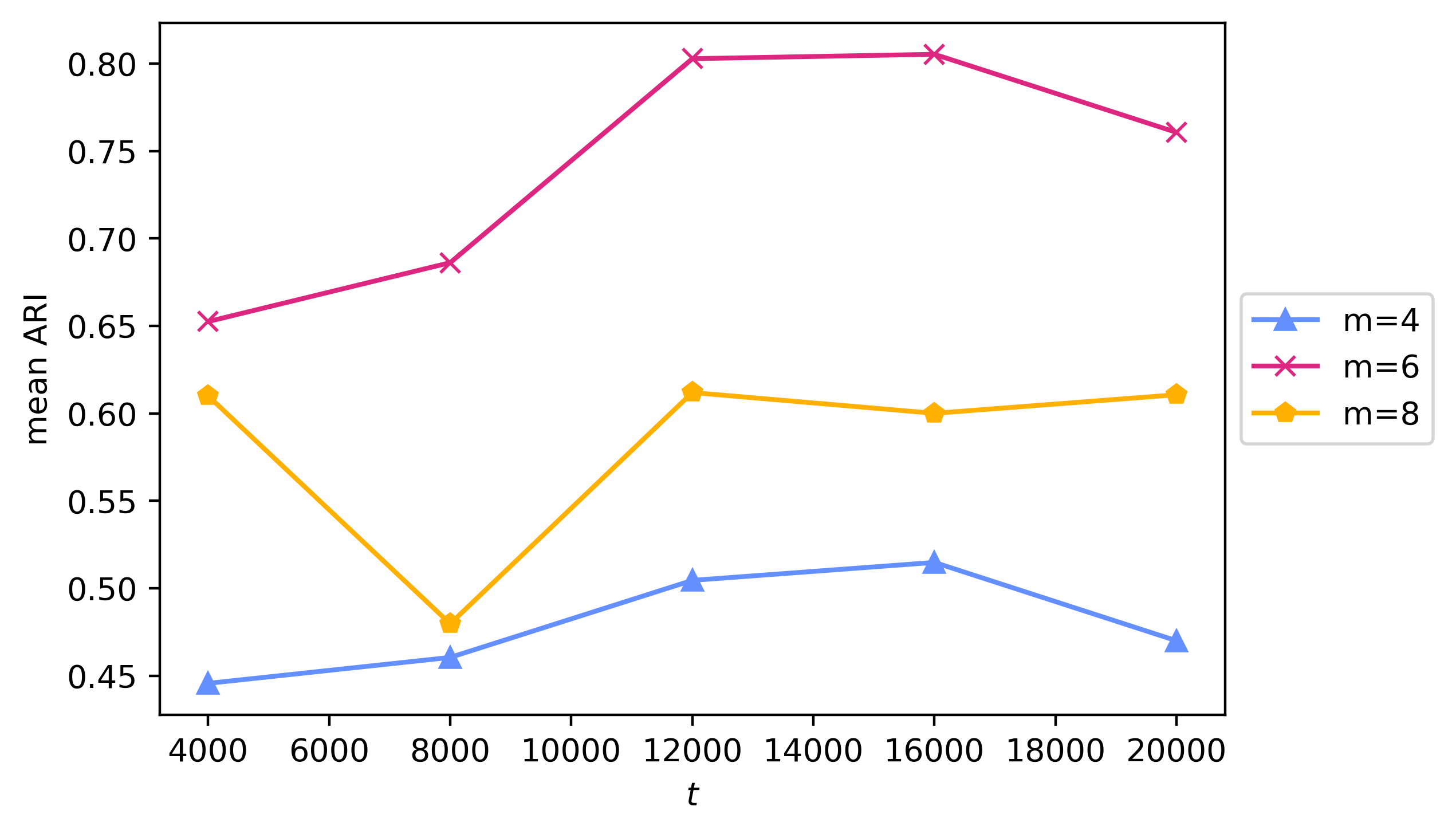}
    \caption{Mean ARI for the decoder with $n=16$, $m=4,6,8$, $p=0.1$, and $q=0.3$ for a varying number of checks $t$ at each partition type.}
    \label{fig:plot_ARI}
\end{figure}

% \jess{ARI for Zachary comment; comparable to Louvain Method, Infomap (citations)}

We would be remiss not to apply our algorithm to a classic real-world example: Zachary's Karate Club \cite{zach_og,zach_debut}.  Many state-of-the-art community detection algorithms are applied to this network in \cite{zach_parameters}, where the authors find ARI ranging from 0.33 to 0.94, depending on the algorithm. (See Table 5 of \cite{zach_parameters}.) We find middling performance for our algorithm on this benchmark; considering the 34-vertex network as an element of $C_{34,10}$, and matching the empirical $p,q$ values for the dataset in the inputs to SASH, we achieve a peak ARI of 0.46 and a mean ARI of 0.37 across 100 trials, while searching within $C_{34,15}$ yields a peak ARI of 0.57 and a mean ARI of 0.41. While these results are competitive with some established techniques, we attribute the lack of compelling performance to the relative sparsity of the real-world network and consider this density question an open line of future work. 

\section{Conclusion}
\label{sec:conc}

We introduced a novel framework for approaching community detection in networks from the perspective of error-correcting codes. We presented the parameters of our community codes, including positioning the minimum discrepancy of a code as the appropriate measure of distance in this setting. We then presented a decoding algorithm, SASH, that leverages the code and channel properties. Finally, we included performance results for the algorithm.

Future directions include understanding the effect of mismatches between actual and approximated $p$, $q$, and $m$, an analysis of the computational complexity of our algorithm, improving SASH to account for variation in allowed partition types, and optimizing SASH for sparse networks.

\bibliographystyle{IEEEtran}
\bibliography{references}

\end{document}